\documentclass[twocolumn,10pt]{IEEEtran}
\usepackage{braket}

%
%
%
%
%

\input{def.tex}
\usepackage{dsfont}
\usepackage{minibox}
\DeclareSymbolFont{matha}{OML}{Txmi}{m}{it}
\DeclareMathSymbol{\varv}{\mathord}{matha}{118}
\usepackage{eurosym}
\usepackage{hhline,physics}

\usepackage{multicol}
\usepackage{algorithm,algorithmicx}
\usepackage{graphicx}
\usepackage{textcomp}
\usepackage{caption,pifont}
\usepackage[multiple]{footmisc}

\usepackage[english]{babel}

\pagestyle{empty}

\IEEEoverridecommandlockouts
\begin{document}
	\title{Coverage Probability of Double-IRS Assisted Communication Systems}
	\author{Anastasios Papazafeiropoulos, Pandelis Kourtessis, Symeon Chatzinotas, John M. Senior \thanks{A. Papazafeiropoulos is with the Communications and Intelligent Systems Research Group, University of Hertfordshire, Hatfield AL10 9AB, U. K., and with the  SnT at the University of Luxembourg, Luxembourg.  P. Kourtessis and John M.Senior are with the Communications and Intelligent Systems Research Group, University of Hertfordshire, Hatfield AL10 9AB, U. K. S. Chatzinotas is with the SnT at the University of Luxembourg, Luxembourg. This work was supported  by the University of Hertfordshire's 5-year Vice Chancellor's Research Fellowship and by the National Research Fund, Luxembourg, under the project RISOTTI. E-mails: tapapazaf@gmail.com, \{p.kourtessis,j.m.senior\}@herts.ac.uk, symeon.chatzinotas@uni.lu.}}
	\maketitle\vspace{-1.7cm}
	\begin{abstract}
		In this paper, we focus on the coverage probability of a double-intelligent reflecting surface (IRS) assisted wireless network and study the impact of multiplicative beamforming gain and correlated Rayleigh fading. In particular, we obtain a novel closed-form expression of the coverage probability of a single-input single-output (SISO) system assisted by two large IRSs while being dependent on the corresponding arbitrary reflecting beamforming matrices (RBMs) and large-scale statistics in terms of correlation matrices. Taking advantage of   the large-scale statistics, i.e., statistical channel state information (CSI), we perform optimization of the RBMs of both IRSs once per several coherence intervals rather than at each interval. Hence,		we achieve a reduction of the computational complexity, otherwise increased in multi-IRS-assisted networks during their RBM optimization. 	 Numerical results validate the analytical expressions even for small IRSs, confirm enhanced performance over the conventional single-IRS counterpart,  and reveal insightful properties.
	\end{abstract}
	\begin{keywords}
		Intelligent reflecting surface (IRS), distributed 		IRSs, cooperative passive beamforming,  coverage probability, beyond 5G networks.
	\end{keywords}
	
	\section{Introduction}
	Intelligent reflecting surface (IRS) is a recent hardware technology, which promises a significant increase in energy efficiency while achieving high spectral efficiency gains.  Its main advantage stems from its structure consisting of a large number of nearly passive elements that can shape the propagation environment since they can be digitally controlled by adaptively inducing amplitude changes and phase shifts on the impinging waves. Subsequently, IRSs have attracted a lot of research interest \cite{Wu2020,Bjoernson2019b,Kammoun2020, Elbir2020,Papazafeiropoulos2021c}.
	
	However, the majority of existing works have considered one or more independently distributed IRSs subject just to a single-signal reflection, while not accounting for any cooperation in terms of joint passive beamforming gain among them to  avoid the undesired interference. Recently, interesting works have  considered a double-IRS-assisted architecture and have studied the cooperative gain \cite{Han2020,Zheng2021,Zheng2021a}. In particular, in \cite{Han2020}, the cooperative beamforming gain in double-IRS assisted systems with a single-antenna transmitter and receiver is shown to be higher, i.e., $ \mathcal{O}(M^{4}) $ instead of  $ \mathcal{O}(M^{2}) $ appearing in single IRS systems \cite{Wu2019}, where $ M $ is the  total number of IRS elements. Also,  \cite{Han2020} relied on the assumption of only line-of-sight (LoS) existence for all links, while solely the double-reflection link between the two IRSs was considered with the single-reflection links being ignored. 	In this direction, by assuming a multi-user multi-antenna transmitter and that both single- and double-reflection links concur, authors in \cite{Zheng2021} and \cite{Zheng2021a} maximized the minimum signal-to-interference-plus-noise ratio and proposed an efficient channel estimation scheme, respectively.
	
	Although the study of the outage/coverage probability in IRS-assisted systems has attracted	a lot of attention \cite{Guo2020a,Yang2020,VanChien2021,Papazafeiropoulos2021a,Papazafeiropoulos2021b}, no work has studied this performance metric in double-IRS assisted systems. In parallel, previous works on double-IRS assisted systems were based on the independent Rayleigh fading  \cite{Han2020,Zheng2021}. However, this assumption is unrealistic in practice as shown in \cite{Bjoernson2020}.

	In comparison to previous literature on double-IRS assisted systems, we present for the first time a study of the coverage probability. To be specific,   we derive a closed-form expression of the signal-to-noise ratio (SNR) of a single-input single-output (SISO) system assisted by large IRSs by deterministic equivalent (DE) analysis. Next, we obtain a novel closed-form expression of the coverage probability. In particular, contrary to both \cite{Han2020, Zheng2021}, we have accounted for correlated Rayleigh fading instead of just LoS channels, and additionally to \cite{Han2020}, we have encountered single-reflection links, which is a more realistic assumption to identify the realistic potentials of the proposed architecture. Furthermore, our optimization approach, based on statistical CSI, is quite advantageous in terms of computational complexity compared to \cite{Zheng2021}, which relies on instantaneous CSI since it can take place once per several coherence intervals instead of at each coherence interval as required in instantaneous-CSI works.
	
	\textit{Notation}: Vectors and matrices are denoted by boldface lower and upper case symbols, respectively. The notations $(\cdot)^\T$, $(\cdot)^\H$, and $\tr\!\left( {\cdot} \right)$ represent the transpose, Hermitian transpose, and trace operators, respectively. Also, the notations $ \arg\left(\cdot\right) $ and $ \mod(\cdot,\cdot) $ denote the argument function and the modulus operation while the expectation operator is denoted by $\EE\left[\cdot\right]$. The notation  $\diag\left(\bA\right) $ expresses a vector with elements the diagonal elements of $ \bA $, while   $\mathrm{vec}(\bA) $ denotes a vector formed by stacking all the columns of $ \bA $ into
	a column vector. Given two infinite sequences $a_n$ and $b_n$, the relation $a_n\asymp b_n$ is equivalent to $a_n - b_n \xrightarrow[ n \rightarrow \infty]{\mbox{a.s.}} 0$. Finally, $\bb \sim \cC\cN{(\b0,\mathbf{\Sigma})}$ represents a circularly symmetric complex Gaussian vector with {zero mean} and covariance matrix $\mathbf{\Sigma}$.

	\section{System Model}\label{System}
	We consider a double IRS, cooperatively assisted, communication system between a single-antenna transmitter (Tx)  and a single-antenna receiver (Rx), where two spatially-distanced two-dimensional rectangular grid IRSs are placed in the locations of obstacles, found in the intermediate space, as shown in Fig. \ref{Fig0}. One IRS is closer to the Tx and the other closer to the Rx, and we refer to them as IRS $ 1 $ and IRS $2 $, respectively. This is a reasonable design to  exploit the IRS properties during their placement to enhance the communication, which is considered in double-IRS assisted works \cite{Han2020,Zheng2021,Zheng2021a}. Also, let a total of $ N $ reflecting elements in both IRSs with $ N_{i} $ denoting the number of elements of IRS $ i $, i.e, $ N_{1}+N_{2}=N $. 	 	In addition, we address the scenario with a blocked direct signal, where the Rx is served only through  reflection links by the IRSs. Each IRS is connected to a smart controller with the Tx through a separate perfect backhaul link to adjust its phases. 	To focus on fundamental properties of the double-IRS-assisted system, we rely on the assumption of perfect CSI, allowing more mathematical manipulations. Hence, the results play the role of upper bounds of practical implementations. In practice, the   CSI could be assumed perfectly known when the coherence intervals are sufficiently long. The consideration of  the imperfect CSI scenario, which is more practical, is the topic of future work.
	
	\begin{figure}[!h]
		\begin{center}
			\includegraphics[width=0.85\linewidth]{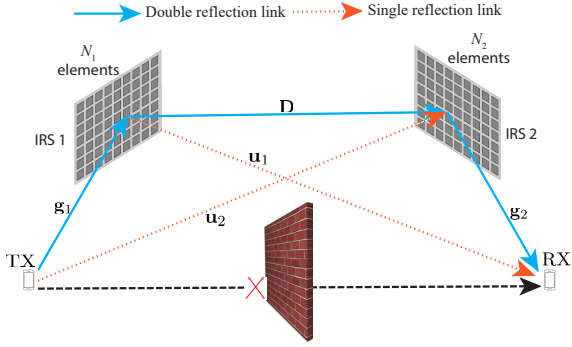}
			\caption{\footnotesize{A double-IRS cooperatively assisted SISO model. }}
			\label{Fig0}
		\end{center}
	\end{figure} 			
	Based on a  block-fading model with independent channel realizations across different coherence blocks, we denote $ 	\bg_{1}\in \mathbb{C}^{N_{1} \times 1}$,   $ 	\bD\in \mathbb{C}^{N_{2} \times N_{1}}$, $ 	\bg_{2}\in \mathbb{C}^{N_{2} \times 1}$, $ 	\bu_{1}\in \mathbb{C}^{N_{1} \times 1}$, and $ 	\bu_{2}\in \mathbb{C}^{N_{2} \times 1}$ as the channel vectors from the Tx to IRS $ 1 $, from IRS $ 1 $ to IRS $  2$, from IRS $  2$ to the Rx, from IRS $  1$ to the Rx, and from the Tx  to IRS $ 2 $, respectively. Taking into account for correlated Rayleigh fading and path-loss,\footnote{The analysis could be easily extended to account for  correlated Rician fading, where a  LoS component exists additionally to a multipath part but this  is the topic of future work.} we have $ \bg_{1}\sim \mathcal{CN}\left(\b0, \bR_{t1}\right) $,  $ \mathrm{vec}(\bD)\sim \mathcal{CN}\left(\b0,\bR_{12} \right) $, $ \bg_{2}\sim \mathcal{CN}\left(\b0, \bR_{2r}\right) $, $ 	\bu_{1}\sim \mathcal{CN}\left(\b0, \bR_{1r}\right) $, and $ 	\bu_{2}\sim \mathcal{CN}\left(\b0, \bR_{t2}\right) $, 	where 
	$\bR_{t1}=\beta_{t1} \bR_{1}\in \mathbb{C}^{N_{1} \times N_{1}} $, $ \bR_{12}=\beta_{12}\bR_{1}\otimes \bR_{2}\in \mathbb{C}^{N_{1}N_{2} \times N_{1}N_{2}}  $, $\bR_{2r}=\beta_{2r} \bR_{2}\in \mathbb{C}^{N_{2} \times N_{2}} $, $\bR_{1r}=\beta_{1r} \bR_{1}\in \mathbb{C}^{N_{1} \times N_{1}} $, and $\bR_{t2}=\beta_{t2} \bR_{2}\in \mathbb{C}^{N_{2} \times N_{2}} $ are the spatial covariance matrices of the respective links  with  $  \beta_{t1} $, $\beta_{12} $, $ \beta_{2r} $, $ \beta_{1r}$, and $  \beta_{t2}  $ being  the  path-losses and $ \bR_{1} \in \mathbb{C}^{N_{1} \times N_{1}}$, $ \bR_{2} \in \mathbb{C}^{N_{2} \times N_{2}}$ being the correlation matrices of IRS $ 1 $ and IRS $ 2 $. Note that the path-losses and the covariance matrices are assumed known while they can be obtained with practical methods, {e.g., see} \cite{Neumann2018}. Similar to \cite{Zheng2021,Zheng2021a}, we assume receive RF chains integrated
	into the IRSs. These chains have sensing abilities and enable the acquisition of the CSI of individual channels. 	In particular, concerning the correlation model, we have considered a suitable model for IRSs, which is obtained under the conditions of rectangular IRSs and isotropic Rayleigh fading \cite{Bjoernson2020}. Specifically, let  $d_\mathrm{V}$ and $d_\mathrm{H}$ denote the vertical height and  horizontal width of each IRS element (both IRSs are constructed in the same way and have the same size of elements). Then,  the $ \left(i,j\right) $th element of the corresponding correlation matrix $ \bR \in \mathbb{C}^{N \times N}  $ of the channels above in the case of an IRS with $ N=N_{\mathrm{H}}N_{\mathrm{V}}  $ elements is given by
	\begin{align} \label{eq:Element}
\bR_{ij} = d_{\mathrm{H}} d_{\mathrm{V}} \mathrm{sinc} \left( 2 \|\mathbf{u}_{i} - \mathbf{u}_{j} \|/\lambda\right),
	\end{align}
	where $\mathbf{u}_{\epsilon} = [0, \mod(\epsilon-1, N_\mathrm{H})d_\mathrm{H}, \lfloor (\epsilon-1)/N_\mathrm{H} \rfloor d_\mathrm{V}]^\T$, $\epsilon \in \{i,j\}$, and $\lambda$ is the wavelength of the plane wave while $ N_{\mathrm{H}} $ and  $ N_{\mathrm{V}} $ denote the elements per row   and per column of the two-dimensional rectangular IRS.
	 
	Relying on a slowly varying flat-fading narrowband channel model, the complex-valued received signal at the Rx through the double IRS assisted network  is expressed by
	\begin{align}
		y=\left(\bg_{1}^{\H}\bPhi_{1}\bD\bPhi_{2}\bg_{2}
		+\bg_{1}^{\H}\bPhi_{1}\bu_{1}+\bu_{2}^{\H}\bPhi_{2}\bg_{2}\right)x+z,\label{received}
	\end{align}
	where $ \bPhi_{i}=\mathrm{diag}\left(\al_{i1} \exp\left(j \theta_{i1}\right), \ldots, \al_{iN} \exp\left(j \theta_{iN_{i}}\right)\right)\in \mathbb{C}^{N_{i}\times N_{i}}$ expresses the reflecting beamforming matrix  of the $ i $th IRS, which depends on its elements with $ \theta_{in} \in \left[ 0, 2 \pi \right]$ and $ \al_{in} \in (0,1], n=1,\ldots,N_{i}$ being the phase shifts and the fixed amplitude reflection coefficients of the corresponding  element of the $ ith$ IRS ($ i=1,2 $). The progress on loss-less meta-surfaces allows to assume   maximum reflection, i.e., we set $ \al_{in}=1, \forall~i,n$ \cite{Bjoernson2019b}.\footnote{Recently, it was shown that the
		amplitude and phase responses are intertwined in practice, while the assumption of unity amplitude is unrealistic \cite{Abeywickrama2020}. However, the unity assumption amplitude still allows to reveal fundamental properties of the coverage probability of the proposed model.  The consideration of the phase shift model in \cite{Abeywickrama2020} suggests an interesting idea for extension of the 
		current work, i.e., to study the impact of coverage probability of double-IRS assisted systems by accounting for this intertwinement.} Moreover,  $ x $ expresses the transmitted data symbol satisfying $ \EE[|x|^{2}] \!=\!P$ with $ P $ being the average power of the symbol, and  $ z\sim\mathcal{CN}\left(0,N_{0}\right) $ expresses the additive white Gaussian noise (AWGN) sample.

	\section{Coverage Probability}\label{coverageProbability} 
	In this section, we present the derivation of the coverage probability when the double IRS assisted communication underlies to correlated Rayleigh fading 
	by applying DE analysis tools \cite{Papazafeiropoulos2015a}. 
	
	\subsection{Main Results}	
	The coverage probability, $ {P}_{\mathrm{c}} $, is defined as the probability, where the effective received SNR at the Rx, $ \gamma $, is larger than a given threshold $ T $. Mathematically described, we have  $ {P}_{\mathrm{c}}=\mathrm{Pr}\left(\gamma>T\right) $, where
	\begin{align}
		\gamma ={\gamma_{0}}{\big|\bg_{1}^{\H}\bPhi_{1}\bD\bPhi_{2}\bg_{2}
			+\bg_{1}^{\H}\bPhi_{1}\bu_{1}+\bu_{2}^{\H}\bPhi_{2}\bg_{2}\big|^{2}}{}\label{general}
	\end{align}
	is obtained based on \eqref{received} by assuming coherent communication and $ {\gamma_{0}}=P/N_{0} $ is the average transmit SNR. The exact derivation of the  probability function of this SNR is intractable while the  consideration of correlated Rayleigh fading hinders further the derivation. For this reason, we resort to the DE analysis to obtain the approximated DE SNR, which appears to match tightly  with $ \mathrm{Pr}\left(	\gamma>T\right) $ as shown in Section \ref{Numerical}. Notably, DE tools result in tight approximations even for finite practical dimensions, e.g., $ 8 \times 8 $ systems (see \cite{Papazafeiropoulos2015a} and references therein).  Obviously, this range is of practical interest in IRS-assisted systems. Also, the agreement of the analytical results with Monte-Carlo (MC) simulations in 		Section IV for finite $ N $ confirms this assertion. 	Hence, Proposition 1 below and the following results describe realistic
		systems of finite dimensions. Moreover, it is worthwhile to mention that even in the single-IRS-assisted scenario, the majority of works on IRS-assisted systems have provided approximations  based on the central limit theorem (CLT) that assumes large IRSs.  Hence, similar  to the existing relevant  literature \cite{Guo2020a,Yang2020,Papazafeiropoulos2021a},  we assume large IRSs ($ N_{i}\to \infty, $ $ i=1,2 $).
	
	\begin{proposition}\label{proposition:SNR}
		The DE SNR of a SISO transmission, enabled by a double IRS with correlated Rayleigh fading is given by
		\begin{align}
			\frac{1}{N_{1}N_{2}}{\gamma}
			&\asymp \bar{\gamma}\!,\!\label{DE_SNR1}
		\end{align}
		where 
		\begin{align}
				\bar{\gamma}\!&=\!\gamma_{0}\frac{1}{N_{1}N_{2}}\!\big(\!\tr\left(\bR_{t1}\bPhi_{1}\bR_{2}\bPhi_{1}^{\H} 	\right)\!\tr \!\left(\bR_{r2}\bPhi_{2}^{\H}\bR_{1}\bPhi_{2}\right)\nn\\
			&+\!\tr\!\left(\bR_{t1}\bPhi_{1}\bR_{1r}\bPhi_{1}^{\H}\right)\!+\!\tr\!\left(\bR_{t2}\bPhi_{2}\bR_{2r}\bPhi_{2}^{\H}\right)\!\!\big)\!.\!\label{DE_SNR} 
		\end{align}
	\end{proposition}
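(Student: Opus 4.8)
The plan is to compute the deterministic equivalent of $\gamma/(N_1 N_2)$ by expanding the squared magnitude in \eqref{general} into its constituent terms and showing, term by term, which ones survive the normalization and which vanish asymptotically. Writing $A \defeq \bg_1^\H \bPhi_1 \bD \bPhi_2 \bg_2$ for the double-reflection contribution, $B \defeq \bg_1^\H \bPhi_1 \bu_1$ for the IRS$\,$1 single-reflection, and $C \defeq \bu_2^\H \bPhi_2 \bg_2$ for the IRS$\,$2 single-reflection, we have $\gamma/\gamma_0 = |A|^2 + |B|^2 + |C|^2 + 2\Re(A\bar B) + 2\Re(A\bar C) + 2\Re(B\bar C)$. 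The first step is to argue that the three cross terms, once divided by $N_1 N_2$, converge almost surely to zero: each is a product of independent zero-mean Gaussian-type forms (the five channel vectors $\bg_1,\bD,\bg_2,\bu_1,\bu_2$ are mutually independent), so after normalization their expectations vanish and a standard variance/concentration bound (as collected in \cite{Papazafeiropoulos2015a}) kills the fluctuations. This isolates $|A|^2 + |B|^2 + |C|^2$ as the only surviving mass.

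Next I would handle the three surviving quadratic terms using the trace lemmas of the DE toolbox. For $|B|^2 = \bg_1^\H \bPhi_1 \bu_1 \bu_1^\H \bPhi_1^\H \bg_1$, conditioning on $\bu_1$ and applying the standard result $\bx^\H \bM \bx \asymp \tr(\bM)$ for $\bx \sim \cC\cN(\b0,\Id)$-type vectors with $\bR_{t1}$ absorbed, one gets $\E[|B|^2] = \tr(\bR_{t1}\bPhi_1 \bR_{1r}\bPhi_1^\H)$, and likewise $\E[|C|^2] = \tr(\bR_{t2}\bPhi_2 \bR_{2r}\bPhi_2^\H)$; these match the last two terms of \eqref{DE_SNR}, and since they are $\mathcal{O}(N_i)$ rather than $\mathcal{O}(N_1 N_2)$ one might naively worry they vanish — but here I suspect the normalization in the statement is chosen so that $|A|^2$ is $\mathcal{O}(N_1^2 N_2^2)$ hence $\mathcal{O}((N_1 N_2)^2/(N_1 N_2)) = \mathcal{O}(N_1 N_2)$ after dividing, actually wait — re-reading \eqref{DE_SNR1} the left side is $\gamma/(N_1 N_2)$ and the right side retains all three terms with a $1/(N_1 N_2)$ prefactor, so in fact all three terms are being kept at the same order and the claim is the full equivalent, not a leading-order one; the single-reflection traces are genuinely $\mathcal{O}(N_i)$ so after the $1/(N_1 N_2)$ they are $\mathcal{O}(1/N_j)$, consistent with being retained as written. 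So the bookkeeping is: establish $|B|^2 \asymp \tr(\bR_{t1}\bPhi_1\bR_{1r}\bPhi_1^\H)$ and the analogous statement for $|C|^2$ directly.

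The main work is the double-reflection term $|A|^2 = \bg_1^\H \bPhi_1 \bD \bPhi_2 \bg_2 \bg_2^\H \bPhi_2^\H \bD^\H \bPhi_1^\H \bg_1$. I would condition on $\bD$ first and use $\bg_2 \bg_2^\H \to \bR_{2r}$ in a quadratic form, then $\bg_1^\H(\cdot)\bg_1 \to \tr(\bR_{t1}(\cdot))$, reducing $|A|^2$ to $\tr\!\big(\bR_{t1}\bPhi_1 \bD \bPhi_2 \bR_{2r}\bPhi_2^\H \bD^\H \bPhi_1^\H\big)$ up to vanishing error. The remaining expectation over $\bD$ with $\mathrm{vec}(\bD) \sim \cC\cN(\b0, \beta_{12}\bR_1 \otimes \bR_2)$ is the crux: using the identity $\E[\bD \bM \bD^\H] = \tr(\bR_2 \bM^\T)\, \text{(something)} \cdot \bR_1$ — more precisely, for $\mathrm{vec}(\bD)$ with Kronecker covariance $\bR_1 \otimes \bR_2$ one has $\E[\bD \bM \bD^\H] = \beta_{12}\,\tr(\bR_2 \bM)\,\bR_1$ (with appropriate transpose conventions on $\bM$) — one collapses the expression into a product of two separate traces, yielding $\beta_{12}\,\tr(\bR_{t1}\bPhi_1 \bR_1 \bPhi_1^\H)\,\tr(\bR_{2r}\bPhi_2^\H \bR_2 \bPhi_2)$; folding the $\beta_{12}$ and the definitions $\bR_{t1} = \beta_{t1}\bR_1$ etc. into the covariance matrices recovers the first term of \eqref{DE_SNR} (modulo the apparent index swap between $\bR_1,\bR_2$ in the two traces, which I would track carefully against the Kronecker ordering in $\bR_{12} = \beta_{12}\bR_1 \otimes \bR_2$). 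The hard part will be making the nested conditioning rigorous — controlling the error propagation when one DE approximation is substituted inside another, and verifying that the Kronecker-structured covariance of $\bD$ interacts correctly with the two phase matrices $\bPhi_1,\bPhi_2$ on opposite sides — since a careless transpose or conjugate there would scramble which correlation matrix appears in which trace.
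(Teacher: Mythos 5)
Your proposal follows essentially the same route as the paper's Appendix A: expand the square into double- and single-reflection terms, kill the three cross terms by independence of the channel vectors, evaluate $|B|^2$ and $|C|^2$ by nested applications of the trace lemma, and factorize the double-reflection term into a product of two traces via the Kronecker structure of $\bD$'s covariance --- your identity $\E[\bD\bM\bD^\H]=\beta_{12}\tr\left(\bR_{2}\bM\right)\bR_{1}$ is just the expectation form of the paper's step of writing $\bD=\bR_{2}^{1/2}\tilde{\bD}\bR_{1}^{1/2}$ and applying the trace lemma to the whitened $\tilde{\bD}$. The index swap you flag at the end is real but works in your favor: your placement of $\bR_{1}$ alongside $\bPhi_{1}$ and $\bR_{2}$ alongside $\bPhi_{2}$ is the dimensionally consistent bookkeeping, and you also correctly retain the $\beta_{12}$ factor that the paper loses between \eqref{term4} and \eqref{term5}.
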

	\begin{proof}
		See Appendix~\ref{proposition1}.
	\end{proof}
	
	Obviously, the SNR in  \eqref{DE_SNR1} corresponds to a simple closed form that depends only on the RBMs of the corresponding IRSs and the statistical CSI in terms of the path-losses and the correlation matrices of the various channels involved. For example, the linear dependence from the path-losses shows how these degrade the SNR. Also, we observe that in the case of the unrealistic independent Rayleigh fading, where the correlation matrices are diagonal, the DE SNR does not depend on the phase-shifts, which means that it cannot be optimized.

	\begin{proposition}\label{proposition:Coverage}
		The coverage probability of a SISO transmission, enabled by a double IRS with correlated Rayleigh fading for arbitrary phase shifts, is tightly approximated as
		\begin{align}
			P_{\mathrm{c}}&\approx\sum^{M}_{n=1} \!\binom{M}{n}\!\left( -1 \right)^{n+1} \mathrm{e}^{ -n \eta \frac{T}{\bar{\gamma}}}\!\!\! \label{general1}		\\
			&=1-\big(1-\mathrm{e}^{-\frac{ \eta T}{	\bar{\gamma}_{}}}\big)^{\!M},\label{general2}	
		\end{align}	
		where   $\eta=M \left( M! \right)^{-\frac{1}{M}}$ with $ M $ being the number of terms used in the calculation.
	\end{proposition}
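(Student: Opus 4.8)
The plan is to reduce $P_{\mathrm{c}}$ to the complementary CDF of the effective SNR $\gamma$ and to model the law of $\gamma$ by a Gamma distribution whose mean is the deterministic equivalent $\bar\gamma$ of Proposition~\ref{proposition:SNR}. First I would record the structural fact that the effective channel $\bg_{1}^{\H}\bPhi_{1}\bD\bPhi_{2}\bg_{2}+\bg_{1}^{\H}\bPhi_{1}\bu_{1}+\bu_{2}^{\H}\bPhi_{2}\bg_{2}$, conditioned on $(\bg_{1},\bg_{2})$, is a sum of three \emph{independent}, zero-mean complex Gaussian terms (linear, respectively, in $\mathrm{vec}(\bD)$, $\bu_{1}$ and $\bu_{2}$, which are mutually independent), hence itself conditionally $\cC\cN(\b0,\sigma^{2}(\bg_{1},\bg_{2}))$ with $\sigma^{2}$ the sum of the three conditional powers; consequently $\gamma\cond(\bg_{1},\bg_{2})$ is exponential. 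Averaging the quadratic and product-of-quadratic forms in $\bg_{1},\bg_{2}$ that constitute $\sigma^{2}$, and invoking the same concentration that produced Proposition~\ref{proposition:SNR}, identifies its mean with $\bar\gamma$; the residual fluctuation of $\sigma^{2}$ at finite IRS sizes makes the law of $\gamma$ close to a Gamma law, so I would model $\gamma$ as $\mathrm{Gamma}(M,\bar\gamma/M)$, i.e.\ shape $M$ (the effective number of contributions retained, equivalently the truncation order used below) and first moment $\bar\gamma$.

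Granting that model, $P_{\mathrm{c}}=\mathrm{Pr}(\gamma>T)=1-\frac{\gamma(M,MT/\bar\gamma)}{\Gamma(M)}$, the tail of a normalised lower incomplete gamma function $\gamma(\cdot,\cdot)$. The second step is to replace it by the well-known tight (Alzer-type) approximation $\frac{\gamma(M,x)}{\Gamma(M)}\approx\big(1-\mathrm{e}^{-\beta x}\big)^{M}$, where $\beta$ is fixed by matching the $x\downarrow 0$ behaviour: $\gamma(M,x)/\Gamma(M)\sim x^{M}/M!$ while $\big(1-\mathrm{e}^{-\beta x}\big)^{M}\sim(\beta x)^{M}$, forcing $\beta=(M!)^{-1/M}$; both sides moreover share the correct limit $1$ as $x\uparrow\infty$, which is what makes the approximation uniformly tight. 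Writing $\eta\bydef M\beta=M(M!)^{-1/M}$ and evaluating at $x=MT/\bar\gamma$ gives $P_{\mathrm{c}}\approx 1-\big(1-\mathrm{e}^{-\eta T/\bar\gamma}\big)^{M}$, which is \eqref{general2}.

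The final step is purely algebraic: expand $\big(1-\mathrm{e}^{-\eta T/\bar\gamma}\big)^{M}=\sum_{n=0}^{M}\binom{M}{n}(-1)^{n}\mathrm{e}^{-n\eta T/\bar\gamma}$ by the binomial theorem, subtract from $1$ (the $n=0$ term cancels the leading $1$), and re-index $n$, yielding $P_{\mathrm{c}}\approx\sum_{n=1}^{M}\binom{M}{n}(-1)^{n+1}\mathrm{e}^{-n\eta T/\bar\gamma}$, i.e.\ \eqref{general1}. As a consistency check, $M=1$ gives $\eta=1$ and $P_{\mathrm{c}}\approx\mathrm{e}^{-T/\bar\gamma}$, the plain exponential tail matching the deterministic-equivalent SNR of Proposition~\ref{proposition:SNR}.

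The main obstacle is the first step: justifying that the law of $\gamma$ — which is in reality the squared modulus of a sum of products of correlated Gaussians, of generalised-$K$/product-of-Gamma type — is well captured by a single Gamma variable with integer shape $M$, and pinning down how $M$ is selected (second-moment matching versus treating $M$ as a free accuracy parameter). Bounding the error of the Alzer-type approximation uniformly in $T$ is a secondary technical point. Since both approximations are standard and known to be accurate even for moderate dimensions, the cleanest route is to state the Gamma model explicitly, invoke the Alzer-type bound, and defer the tightness verification to the numerical results of Section~\ref{Numerical}.
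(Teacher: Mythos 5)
Your derivation reproduces the paper's formula, and the second and third steps --- the Alzer-type approximation $\Pr(\tilde g < x)\approx(1-\mathrm{e}^{-\eta x})^{M}$ with $\eta=M(M!)^{-1/M}$, followed by the binomial expansion --- are exactly the paper's steps. The difference is confined to where the Gamma variable comes from, and here the paper takes a different (and cheaper) route than you do. It does not model $\gamma$ itself as Gamma-distributed: it first invokes Proposition~\ref{proposition:SNR} to replace $\gamma$ by its deterministic equivalent, so that $P_{\mathrm{c}}\approx\Pr(1>T/\bar\gamma)$, and then approximates the \emph{constant} $1$ by a dummy unit-mean Gamma($M$) variable $\tilde g$, justified by $\lim_{y\to\infty}\frac{y^{y}x^{y-1}\mathrm{e}^{-yx}}{\Gamma(y)}=\delta(x-1)$. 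In that framing $M$ is purely a smoothing/accuracy parameter (larger $M$ means a tighter approximation of the constant), which dissolves the obstacle you correctly flag at the end --- how $M$ is to be selected --- rather than resolving it by moment matching. Your conditional-Gaussianity observation is sound as far as it goes (given $\bg_{1},\bg_{2}$, the effective channel is zero-mean complex Gaussian, so $\gamma$ is a mixture of exponentials), but note that it actually argues \emph{against} a Gamma fit with integer shape $M>1$: a rate mixture of exponentials has coefficient of variation at least one, i.e., is heavier-tailed than exponential, whereas Gamma($M$) with $M>1$ is lighter-tailed, so second-moment matching would not deliver a large $M$. What your physical argument buys is an honest account of the actual fluctuations of $\gamma$ at finite $N$; what the paper's dummy-variable device buys is a derivation in which the only stochastic approximation is the already-established deterministic equivalent, with all remaining error controlled by the free parameter $M$ and deferred to the numerical validation. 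Either way the closed form \eqref{general1}--\eqref{general2} is the same.
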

	\begin{proof}
		See Appendix~\ref{proposition2}.
	\end{proof}
	
	We observe that the coverage probability depends solely on the  threshold, DE SNR, and the number of terms used in the approximation. Note that the DE SNR  includes indirectly the RBMs, the path-losses, and the correlation matrices. Notably, in the case of independent Rayleigh fading, the coverage probability does not depend on the phase-shifts, and thus, the IRS is not exploitable to be optimized. We study their impact on $ P_{\mathrm{c}} $ in Sec. \ref{Numerical}.
	
	\subsection{Reflecting Beamforming Matrix Optimization}\label{3b}
	Herein, based on the common assumption of infinite resolution phase shifters, we propose an alternating optimization algorithm for designing the cooperative reflecting beamforming solving the problem of maximum $ P_{\mathrm{c}} $, which is formulated as
	\begin{align}\begin{split}
			&\!\!\!(\mathcal{P}1)~\max_{\bPhi_{1},\bPhi_{2}} ~~	P_{\mathrm{c}}\\
			&~~~~~~\mathrm{s.t}~|\phi_{in}|\!=\!1,~~ i\!=\!1,2~\mathrm{and}~n\!=\!1,\dots,N_{i},
		\end{split}\label{Maximization} 
	\end{align}
	where $ 	P_{\mathrm{c}} $ is given by \eqref{general1} or \eqref{general2} and $ \phi_{in}= \exp\left(j \theta_{in}\right) $. 
	
	Not only the problem $ (\mathcal{P}1) $ is non-convex and it is subject to unit-modulus constraints but also a multiplicative coupling between $ \bPhi_{1} $ and $ \bPhi_{2} $   appears in 	$ P_{\mathrm{c}} $ in terms of $ \bar{\gamma} $ given by \eqref{DE_SNR}. For this reason, we perform alternating optimization by optimizing one of the two RBMs  $ \bPhi_{i} $ while fixing the other in an iterative manner until reaching the convergence. Each optimization is achieved in terms of the  projected gradient 	ascent until converging to a 	stationary point. In particular, for given $ \bPhi_{1} $, at the $ l $th step, we consider the vector $ \bs_{1,l} =[\phi_{11}^{l}, \ldots, \phi_{1N_{1}}^{l}]^{\T}$, which contains 	the phases of IRS $ 1 $ at this step. In the next iteration,  $ P_{\mathrm{c}} $ increases until its convergence by projecting the solution onto the closest feasible point according to $ \min_{|\phi_{1n} |=1, n=1,\ldots,N_{1}}\|\bs_{1}-\tilde{\bs}_{1}\|^{2} $ while satisfying the unit-modulus constraint with respect to $ \phi_{1n} $. The next iteration point is described by
	\begin{align}
		\tilde{\bs}_{1,l+1}&=\bs_{1,l}+\mu \bq_{1,l},\label{sol1}\\
		\bs_{1,l+1}&=\exp\left(j \arg \left(\tilde{\bs}_{1,l+1}\right)\right),\label{sol2}
	\end{align}
	where $ \mu $ is the step size derived at each iteration by using the backtracking line search \cite{Boyd2004}. Also,  $ \bq_{1,l} $ expresses the  ascent direction at step $ l $, i.e., $ \bq_{1,l}= \pdv{	P_{\mathrm{c}}}{\bs_{1,l}^{*}} $, which is given by the following Lemma.	
	\begin{lemma}\label{deriv1}
		The derivative of the coverage probability with respect to $ \bs_{1,l}^{*}$ is given by
		\begin{align}
			\!\pdv{	{P_{\mathrm{c}}}}{\bs_{1,l}^{*}}\!\!&=\!\frac{\gamma_{0}}{N_{1}N_{2}}\!\sum^{M}_{n=1} \!\binom{M}{n}\! \frac{\left( -1 \right)^{n+1} n \eta T}{\bar{\gamma}^{2}\mathrm{e}^{ n \eta \frac{T}{\bar{\gamma}}}} \!\big(\diag\left(\bR_{t1}\bPhi_{1}\bR_{1r} \right)\nn\\
			&+\tr \!\left(\bR_{r2}\bPhi_{2}^{\H}\bR_{1}\bPhi_{2}\right)\diag\left(\bR_{t1}\bPhi_{1}\bR_{2} \right)\!\!\big).
		\end{align}
	\end{lemma}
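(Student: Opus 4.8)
The plan is to differentiate through the scalar $\bar{\gamma}$, since by \eqref{general1} the coverage probability depends on $\bs_{1,l}$ only through $\bar{\gamma}$, which is real-valued. The Wirtinger chain rule then gives
\begin{align}
\pdv{P_{\mathrm{c}}}{\bs_{1,l}^{*}}=\pdv{P_{\mathrm{c}}}{\bar{\gamma}}\;\pdv{\bar{\gamma}}{\bs_{1,l}^{*}}\,.
\end{align}
The first factor follows from term-by-term differentiation of the finite sum in \eqref{general1}: using $\frac{\partial}{\partial\bar{\gamma}}\mathrm{e}^{-n\eta T/\bar{\gamma}}=\frac{n\eta T}{\bar{\gamma}^{2}}\mathrm{e}^{-n\eta T/\bar{\gamma}}$ (note $\bar{\gamma}>0$) one gets $\pdv{P_{\mathrm{c}}}{\bar{\gamma}}=\sum_{n=1}^{M}\binom{M}{n}(-1)^{n+1}\frac{n\eta T}{\bar{\gamma}^{2}}\mathrm{e}^{-n\eta T/\bar{\gamma}}$.

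The remaining step is to compute $\pdv{\bar{\gamma}}{\bs_{1,l}^{*}}$ from \eqref{DE_SNR} with $\bPhi_{1}=\diag(\bs_{1,l})$. First I would observe that the third trace in \eqref{DE_SNR} does not contain $\bPhi_{1}$ and hence contributes nothing, while $\tr\!\left(\bR_{r2}\bPhi_{2}^{\H}\bR_{1}\bPhi_{2}\right)$ is a constant multiplying the first trace; so the whole problem reduces to evaluating $\pdv{\tr\!\left(\bA\bPhi_{1}\bB\bPhi_{1}^{\H}\right)}{\bs_{1,l}^{*}}$ for $(\bA,\bB)=(\bR_{t1},\bR_{2})$ and for $(\bA,\bB)=(\bR_{t1},\bR_{1r})$. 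Expanding the trace in components with $(\bPhi_{1})_{jk}=(\bs_{1,l})_{j}\delta_{jk}$ yields $\tr\!\left(\bA\bPhi_{1}\bB\bPhi_{1}^{\H}\right)=\sum_{i,j}A_{ij}(\bs_{1,l})_{j}B_{ji}(\bs_{1,l})_{i}^{*}$, and, treating $\bs_{1,l}$ and $\bs_{1,l}^{*}$ as independent, differentiation with respect to the $m$th conjugate entry $(\bs_{1,l}^{*})_{m}$ leaves $\sum_{j}A_{mj}(\bs_{1,l})_{j}B_{jm}=(\bA\bPhi_{1}\bB)_{mm}$; that is, $\pdv{\tr\!\left(\bA\bPhi_{1}\bB\bPhi_{1}^{\H}\right)}{\bs_{1,l}^{*}}=\diag\!\left(\bA\bPhi_{1}\bB\right)$. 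Applying this with the two pairs above and reinstating the constant prefactor on the double-reflection term gives $\pdv{\bar{\gamma}}{\bs_{1,l}^{*}}=\frac{\gamma_{0}}{N_{1}N_{2}}\big(\diag\!\left(\bR_{t1}\bPhi_{1}\bR_{1r}\right)+\tr\!\left(\bR_{r2}\bPhi_{2}^{\H}\bR_{1}\bPhi_{2}\right)\diag\!\left(\bR_{t1}\bPhi_{1}\bR_{2}\right)\big)$.

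Finally I would multiply the two factors and rewrite $\mathrm{e}^{-n\eta T/\bar{\gamma}}=1/\mathrm{e}^{n\eta T/\bar{\gamma}}$ to reach the stated closed form. The only place that needs care is the index bookkeeping in $\tr\!\left(\bA\bPhi_{1}\bB\bPhi_{1}^{\H}\right)$: one must track which of the two diagonal factors carries the complex conjugate, so that the surviving sum is correctly identified as the diagonal of $\bA\bPhi_{1}\bB$ rather than of a transposed or reordered product; once the index positions are pinned down the rest is a routine Wirtinger differentiation with no analytic subtlety.
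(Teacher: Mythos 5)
Your proposal is correct and follows essentially the same route as the paper: the Wirtinger chain rule through the scalar $\bar{\gamma}$, term-by-term differentiation of the Alzer/Binomial sum for $\pdv{P_{\mathrm{c}}}{\bar{\gamma}}$, and the identity $\pdv{\tr\left(\bA\bPhi_{1}\bB\bPhi_{1}^{\H}\right)}{\bs_{1,l}^{*}}=\diag\left(\bA\bPhi_{1}\bB\right)$, which you obtain by explicit index expansion where the paper instead invokes $\tr\big(\bA\,\mathrm{diag}(\bs_{1,l}^{*})\big)=\left(\diag(\bA)\right)^{\T}\bs_{1,l}^{*}$. The two derivations of the trace derivative are the same computation in different notation, and your index bookkeeping is consistent with the stated result.
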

	\begin{proof}
		See Appendix~\ref{ArbitraryPDFProof2}.
	\end{proof}

Lemma \ref{deriv1} shows that the corresponding  derivative of the coverage probability is obtained in closed-form, which depends mainly on the large-scale statistics, the number of elements of each IRS, and the variable $ M $ describing the number of approximation terms used during the derivation of $ P_{\mathrm{c}} $.

	 At each iteration step, $ P_{\mathrm{c}} $ increases until convergence to the optimum
	value since it is subject to a  power constraint. Next, we continue with the optimization with respect to $ \bPhi_{2} $. Note that since each IRS obeys to a similar solution, we focus on  IRS $ 1 $ while similar expressions hold for the optimization of IRS $ 2 $, which are omitted due to limited space. The RBM  design, based on the gradient ascent, results in an
	outstanding performance since the gradient ascent is obtained in a closed-from with low computational complexity due to the dependence on simple matrix operations. In particular, the complexity of (11) is $ \mathcal{O} \left(M+N^{3} \right) $. Obviously, 	it depends on  the parameters $ M $  and $ N $ but with a higher dependence on $ N $, especially, if $ N $ is large.
	
	The iteration of the overall process achieves
	the increase of $ P_{\mathrm{c}} $  until convergence of the objective to its optimum
	value since it is upper-bounded due to the power constraint. Given the non-convexity of the initial 	optimization problem with respect to the phase shifts, the proposed algorithm  does not
		provide a global optimum but a locally optimal solution, which is  a good preliminary tool to study the coverage probability of double-IRS aided systems.
	\subsection{Baseline Scenario: Conventional Single-IRS Assisted SISO system}
	We assume the baseline system, where all elements $ N=N_{1}+N_{2}  $ are allocated in a single IRS that could be found in the vicinity of the Tx or the Rx.  We denote $ \tilde{\bg}_{1}\sim \mathcal{CN}\big(\b0,\tilde{\bR}_{t}\big) $ and $ \tilde{\bg}_{2}\sim \mathcal{CN}\big(\b0, \tilde{\bR}_{r}\big) $ the baseband channels  of Tx-IRS and IRS-Rx links, respectively. Note that  $ \tilde{\beta}_{t} $, $ \tilde{\beta}_{r}$ and   $ \tilde{\bR}_{t}=\tilde{\beta}_{t}{\bR}\in \mathbb{C}^{N \times N} $, $ \tilde{\bR}_{r}=\tilde{\beta}_{r}{\bR} \in \mathbb{C}^{N \times N}$ are the path-losses and the correlation matrices, which are assumed to be known, while $ \bR=\bR_{1} $ 
since the the IRS of the baseline scenario is placed at the position of IRS $ 1 $. The corresponding RBM is described by $ \tilde{\bPhi}=\mathrm{diag}\big(\tilde{a}_{1} \exp(j \tilde{\theta}_{1}), \ldots,  \tilde{a}_{N}\exp(j \tilde{\theta}_{N})\big)\in \mathbb{C}^{N\times N}$, where $ \tilde{\theta}_{n} \in \left[ 0, 2 \pi \right], n=1,\ldots,N$ and $ \tilde{\al}_{n}=1~\forall n$ are the phase shifts and the fixed amplitude reflection coefficients of the $ n $th IRS element.
	
	The received SNR is written as
	\begin{align}
		\gamma^{*} =\gamma_{0}\big|\tilde{\bg}_{1}^{\H}\tilde{\bPhi}\tilde{\bg}_{2}\big|^{2}\label{general26}
	\end{align}
	while its DE expression, \cite[Lem. 4]{Papazafeiropoulos2015a}, is given by
	\begin{align}
		\frac{1}{N}\gamma^{*}
		&\asymp	\frac{1}{N}\gamma_{0}\tr\big(\tilde{\bR}_{t}\tilde{\bPhi}\tilde{\bR}_{r}\tilde{\bPhi}^{\H}\big)\label{general27}.
	\end{align}
	
	The coverage probability of the SISO channel with a single IRS $ P_{\mathrm{c}}^{*} $ is given by the  expression in \eqref{general1} after substituting $ \bar{\gamma} $ with $ \bar{\gamma}^{*} =\frac{1}{N}\gamma_{0}\tr\big(\tilde{\bR}_{t}\tilde{\bPhi}\tilde{\bR}_{r}\tilde{\bPhi}^{\H}\big)$. This expression can be optimized by applying again the gradient ascent method as in Subsection \ref{3b}. Specifically, to apply  the same algorithm for a single IRS-assisted SISO system, we require $ \pdv{	P_{\mathrm{c}}^{*}}{\bs_{l}^{*}} $, which is obtained as in Appendix \ref{ArbitraryPDFProof2} by 
	\begin{align}
	\!\!	\!\pdv{	{P_{\mathrm{c}}}}{\bs_{l}^{*}}\!\!&=\!\frac{\gamma_{0}}{N_{1}N_{2}}\!\sum^{M}_{n=1} \!\binom{M}{n}\! \frac{\left( -1 \right)^{n+1} n \eta T}{(\bar{\gamma}^{*})^{2}\mathrm{e}^{ n \eta \frac{T}{\bar{\gamma}^{*}}}} \!\diag\big(\tilde{\bR}_{t}\tilde{\bPhi}\tilde{\bR}_{r}\big).
	\end{align}
	
		\begin{figure*}[t]
		\begin{minipage}{0.33\textwidth}
			\centering
			\includegraphics[trim=0cm -0.20cm 0cm 0.2cm, clip=true, width=2.2in]{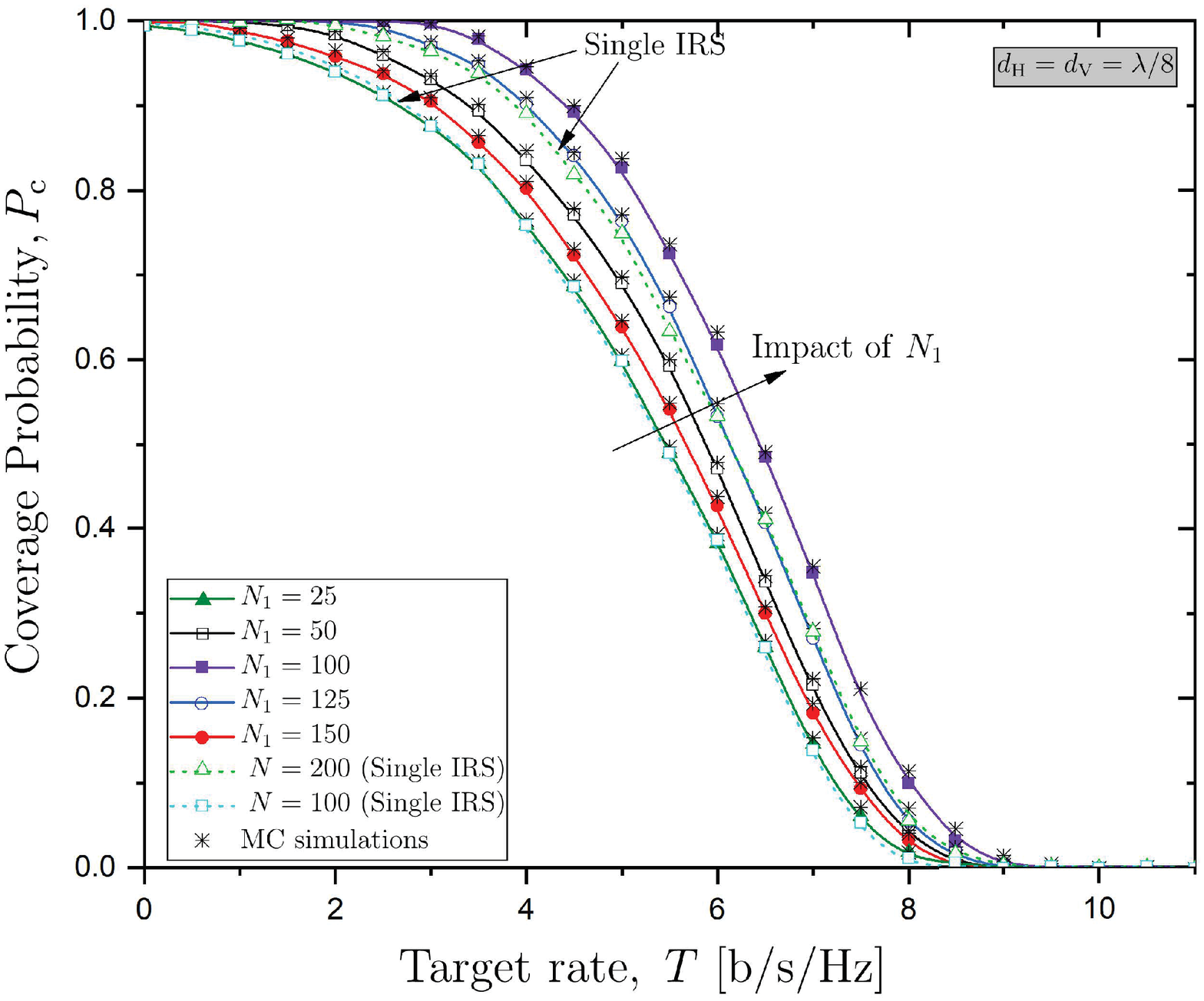} \vspace*{-0.1cm}
			\\ $(a)$
			\vspace*{-0.2cm}
		\end{minipage}
		\begin{minipage}{0.33\textwidth}
			\centering
			\includegraphics[trim=0cm -0.20cm 0cm 0.2cm, clip=true, width=2.2in]{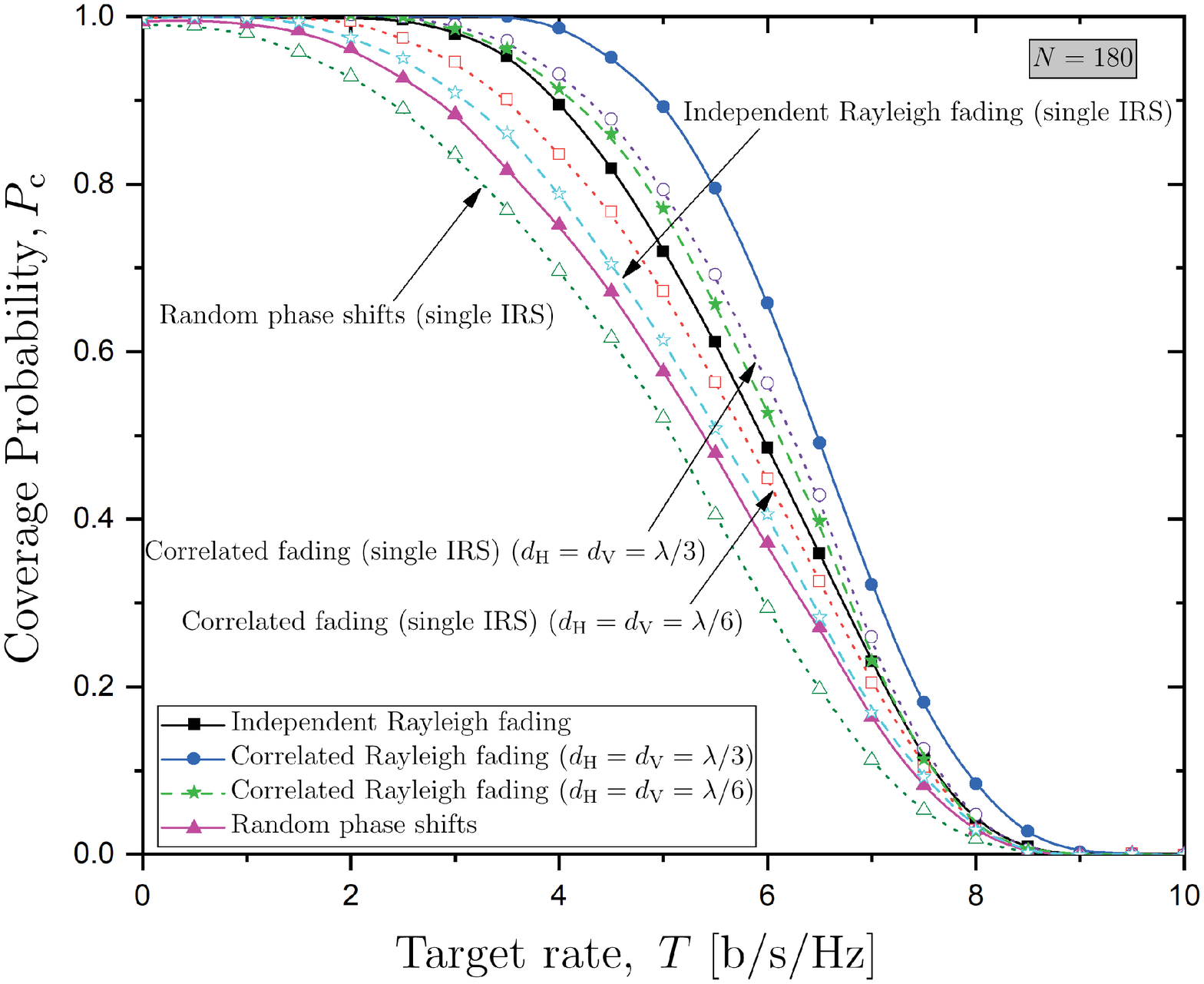} \vspace*{-0.1cm}
			\\$(b)$
			\vspace*{-0.2cm}
		\end{minipage}
		\begin{minipage}{0.33\textwidth}
			\centering
			\includegraphics[trim=0cm -0.20cm 0cm 0.2cm, clip=true, width=2.2in]{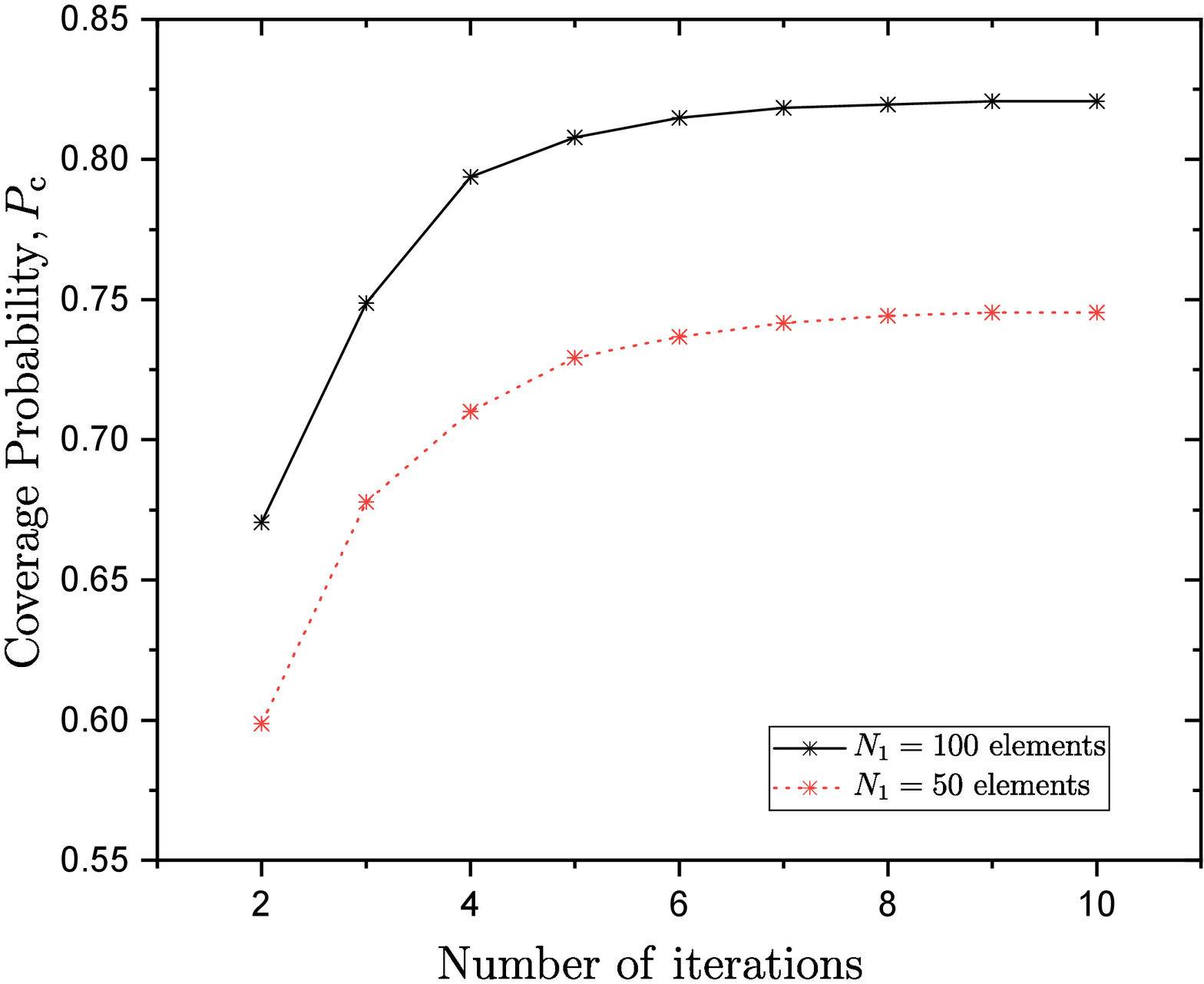}\vspace*{-0.1cm}\\$(c)$
			\vspace*{-0.2cm}
		\end{minipage}
		\caption{Coverage probability  of a double-IRS assisted SISO system with correlated Rayleigh fading versus the target rate $ T $ (analytical results and MC simulations)   $(a)$ for varying $ N_{1} $  ($ N=200 $, $ {d_{\mathrm{H}}=d_{\mathrm{V}}=\lambda/8} $ and $(b)$ for different correlation conditions and with/without optimization ($ N=180 $); $(c)$ Coverage probability versus the number of iterations for varying number of IRSs elements $N_{1}  $ ($ T=5\mathrm{~b/s/Hz} $).}
		\label{Fig1}
		\vspace{-0.65cm}
	\end{figure*}
	\section{Numerical Results}\label{Numerical}
	We consider a three-dimensional Cartesian coordinate system, where the locations of the Tx, Rx, IRS $ 1 $, and IRS $ 2 $ together with the directions of the IRSs follow the lines in \cite{Han2020} to avoid repetition and for the sake of comparison. Similarly, the distances between the Tx and IRS $ 1 $, between IRS $ 1 $ and IRS $ 2 $, and between IRS $ 2 $ and Rx are $ r_{t1}=1~\mathrm{m} $, $ r_{12}=100~\mathrm{m} $, and $ r_{2r}=15~\mathrm{m} $. In this work, similar to \cite{Han2020,Zheng2021}, we assume that the  IRS  in the single-IRS (baseline) case is located at a distance $ \tilde{r}_{t} \approx r_{12}$ to result in nearly the same “product distance” for the two cases, i.e., $  \tilde{r}_{t}  \tilde{r}_{r}\approx r_{t1}r_{12}r_{2r}$, which means $  \tilde{r}_{r}=r_{2r} $ since $ r_{t1}=1~\mathrm{m}  $. Basically, this means that the IRS in the single-IRS case is located at the position of IRS $ 2 $  in the double-IRS case. 	The size of each IRS element is given by $ d_{\mathrm{H}}\!=\!d_{\mathrm{V}}\!=\!\lambda/8 $ \cite{Bjoernson2020}. The spatial correlation matrix for each IRS is given by \eqref{eq:Element}. Moreover, the path-loss exponents for the Tx-IRS $ 1 $
	and IRS $ 2 $-Rx	links are set as $ 2.2 $, assuming that the IRSs are close to the corresponding Tx, Rx with no special obstacles in the intermediate space while the path-loss exponent for the IRS $ 1 $-IRS $ 2 $ link is set as $ 3 $ since the second IRS is expected to be placed at  a long distance, where the Rx will suffer from stronger signal attenuation. The path-loss exponents of the Tx-IRS $ 1 $
	and IRS $ 2 $-Rx links are set to this value since we have assumed that these links approach LoS conditions due to the smart placement of the IRSs near the Tx and Rx toward better communication. A similar exponent is assumed for the single reflection links, i.e., $ 2.2 $ and $ 3 $ for the first and second link, respectively. The  carrier frequency is $3$~GHz, the system bandwidth is $10$~MHz, $ P=43~\mathrm{dBm} $, and the noise variance is $-94$~dBm.

	Fig. \ref{Fig1}.a illustrates the coverage probability versus the target rate for varying $ N_{1} $ based on  Proposition \ref{proposition:Coverage}. Specifically, given a specific number of total elements $ N=200 $, we vary $ N_{1} $. We observe that by increasing the number of elements in IRS $ 1 $, $ P_{\mathrm{c}} $ increases until $ N_{1}=100 $ elements and then starts decreasing. In other words, we notice 		the highest coverage when the two IRS are implemented with almost the same number of elements, which agrees with \cite{Han2020,Zheng2021}.   Moreover, MC simulations almost coincide with the analytical results, which corroborates that  Alzer's inequality, the selected value of $ M $, and the DE analysis are valid approximations. In the same figure, we show the coverage probability of the single-IRS  SISO channel for $ N=100 $ and $ N=200 $ elements under the same conditions and we observe that the double-IRS cooperative system presents better performance.
	
	In Fig.  \ref{Fig1}.b, we depict the coverage probability versus the target rate for different settings. First,  we focus on the impact of correlation. In particular, in the case of no correlation, $ P_{\mathrm{c}} $ is lower since the covariance matrix of the overall channel, found in the DE SNR,  does not depend on the RBMs, and thus, cannot be optimized. This observation coincides with the results in \cite{Papazafeiropoulos2021a} for single-IRS-assisted communication. In the case of correlation, if the correlation increases, e.g., due to decrease of the inter-element distance, $ P_{\mathrm{c}} $ decreases. Moreover, if random phases shifts are assumed in both IRSs, the coverage is much lower, which means that the RBM optimization definitely improves the performance. Also, we depict the performance of the single-IRS-assisted scenario for the above considerations and we observe that the double-IRS architecture provides better coverage as expected. 
	
In  \ref{Fig1}.c, we demonstrate the convergence of the proposed algorithm, i.e., $ (\mathcal{P}1) $ in \eqref{Maximization}. Specifically, we have 	depicted the coverage probability versus the number of iterations for various numbers of  IRS elements. As can be seen,  the algorithm converges fast in all cases. For example, when  $ N_{1} = 50 $ , the algorithm 	converges in $ 9 $ iterations. In addition, we observe that by increasing the IRS   in terms of its elements, more iterations are required for convergence because the amount of optimization variables	increases and the corresponding search space is enlarged. Also, an increase in terms of  IRS 	elements results in higher complexity of each iteration of the  algorithm as mentioned in Sec. III.B.

%

	\section{Conclusion} \label{Conclusion} 
	In this paper, we derived the coverage probability of a double-IRS-assisted SISO system under the realistic conditions of correlated Rayleigh fading. Specifically, for given RBMs, we obtained its novel expression in closed-form in terms of only large-scale statistics. Moreover, the proposed optimization over the RBMs exhibits a great advantage concerning the reduction of the computational complexity since it can be performed not at each coherence interval but once per several intervals. Among others, numerical results showed the outperformance of double-IRS systems over single-IRS systems and the impact of correlated Rayleigh fading. Future works on the coverage of double-IRS assisted systems could elaborate on the design of multi-user and multi-antenna transmission, and possibly, the impact of Rician fading.
	\begin{appendices}
		\section{Proof of Proposition~\ref{proposition:SNR}}\label{proposition1}
		The proof starts by dividing \eqref{general} with $ \frac{1}{N_{1}N_{2}} $. Hence, we have
	\begin{align}
			\!	&\!\!	\frac{1}{N_{1}N_{2}}{\gamma}\!=\!\frac{\gamma_{0}}{N_{1}N_{2}}\big(\big|\bg_{1}^{\H}\bPhi_{1}\bD\bPhi_{2}\bg_{2}\big|^{2}
		\!\!	+\!|\bg_{1}^{\H}\bPhi_{1}\bu_{1}|^{2}	\!\!+\!|\bu_{2}^{\H}\bPhi_{2}\bg_{2}|^{2}\nn\\
			&		+\!2\mathrm{Re}\!\left(\!\left(\bg_{1}^{\H}\bPhi_{1}\bu_{1}\right)^{*}\!\bg_{1}^{\H}\bPhi_{1}\bD\bPhi_{1}\bg_{2}\right)\!+\!2\mathrm{Re}\!\left(\!\left(\bg_{1}^{\H}\bPhi_{1}\bu_{1}\right)^{*}\!\bu_{2}^{\H}\bPhi_{2}\bg_{2}\right)
			\nn\\
			&
		+\!2\mathrm{Re}\!\left(\!\left(\bu_{2}^{\H}\bPhi_{2}\bg_{2}\right)^{*}\!\bg_{1}^{\H}\bPhi_{1}\bD\bPhi_{1}\bg_{2}\right)
			\!\!\!\!	\big)
			\label{DE_SNR10}\\
					&\asymp \frac{\gamma_{0}}{N_{1}N_{2}}\big(\big|\bg_{1}^{\H}\bPhi_{1}\bD\bPhi_{2}\bg_{2}\big|^{2}
			+\tr\left(\bR_{t1}\bPhi_{1}\bR_{1r}\bPhi_{1}^{\H}\right)\nn\\
			&+\tr\left(\bR_{1r}\bPhi_{1}\bR_{r2}\bPhi_{1}^{\H}\right)\!\!		\big)
			,\label{DE_SNR3}
		\end{align}\noindent		where, in \eqref{DE_SNR3}, we have used \cite[Lem. 4]{Papazafeiropoulos2015a}. Especially,  the last three terms  in \eqref{DE_SNR10} vanish as $ N\to \infty $ because of the independence between different channel vectors. The second and third terms in  \eqref{DE_SNR3} are computed by  applying \cite[Lem. 4]{Papazafeiropoulos2015a} twice. Specifically, first, we condition on $ \bu_{2} $ and $ \bg_{2} $ and apply \cite[Lem. 4]{Papazafeiropoulos2015a} as $ N_{1} \to \infty $, and then, we apply it again over $ \bu_{2} $ and $ \bg_{2} $ as $ N_{2}\to \infty $. Regarding the first term, we have
		\begin{align}
			&\!\frac{1}{N_{1}N_{2}}	\big|\bg_{1}^{\H}\bPhi_{1}\bD\bPhi_{2}\bg_{2}\big|^{2}
			\!\!=\!\frac{1}{N_{1}N_{2}}\!\tr\! \left(\bR_{t1}\bPhi_{1}\bD\bPhi_{2}\bg_{2}\bg_{2}^{\H}\bPhi_{2}^{\H}\bD^{\H}\bPhi_{1}^{\H}\right) \label{term2}\\
			&=\frac{1}{N_{1}N_{2}}\tr \big(\bR_{r2}\bPhi_{2}^{\H}\bD^{\H}\tilde{\bR}_{t1}\bD\bPhi_{2}\big) \label{term3}\\
			&=\!\frac{1}{N_{1}N_{2}}\tr\! \big(\bR_{r2}\bPhi_{2}^{\H}\bR_{1}^{1/2}\tilde{\bD}^{\H}\bR_{2}^{1/2}\tilde{\bR}_{t1}\bR_{2}^{1/2}\tilde{\bD}\bR_{1}^{1/2}\bPhi_{2}\!\big) 		\label{term4}\\
			&=\frac{1}{N_{1}N_{2}}\tr\big(\bR_{2}\tilde{\bR}_{t1} 	\big)\tr \left(\bR_{r2}\bPhi_{2}^{\H}\bR_{1}\bPhi_{2}\right),\label{term5}
		\end{align}
		where, in \eqref{term2},  we have applied \cite[Lem. 4]{Papazafeiropoulos2015a} as $ N_{1}\to \infty $ after conditioning on $ \bg_{2} $ and  $ \bD $. In \eqref{term3}, we have set $ \tilde{\bR}_{t1}=\bPhi_{1}^{\H}\bR_{t1}\bPhi_{1} $ and have applied  \cite[Lem. 4]{Papazafeiropoulos2015a} with respect to $ \bg_{2} $ as $ N_{2}\to \infty $ conditioned on $ \bD $. In \eqref{term4}, we have used that $ \bD=\beta_{12} \bR_{2}^{1/2}\tilde{\bD}\bR_{1}^{1/2}$ with the elements of $ \tilde{\bD}$ being of zero mean and unit variance.  The last step includes application of \cite[Lem. 4]{Papazafeiropoulos2015a} with respect to $ \tilde{\bD} $, which gives $\frac{1}{N_{2}}\tilde{\bD}^{\H}\bR_{2}^{1/2}\tilde{\bR}_{t1}\bR_{2}^{1/2}\tilde{\bD}=\frac{1}{N_{2}}\tr\left(\bR_{2}\tilde{\bR}_{t1} 	\right)\Id_{N_{1}} $. Substitution of \eqref{term5} into \eqref{DE_SNR3} results in the DE SNR $ \bar{\gamma} $ in \eqref{DE_SNR} after some simple algrebraic manipulations.
		
		\section{Proof of Proposition~\ref{proposition:Coverage}}\label{proposition2}
		According to the definition of $ 	P_{\mathrm{c}} $, we have
		\begin{align}
			P_{\mathrm{c}}
			& \!\approx 
			{\mathbb{P}}\bigg( \tilde{g}\!>\frac{T}{\bar{\gamma}}\bigg)\label{coverage6}\\
			&\!\!\!\approx 1-\!\big(\!1-\mathrm{e}^{ -\eta \frac{T}{\bar{\gamma}}}\! \big)^{\!M} ,\label{coverage71}
		\end{align} 
		where after substituting the DE SNR from~\eqref{DE_SNR} into the expression of the coverage probability,  we obtain the right member of~\eqref{coverage6}. In~\eqref{coverage6}, we have also approximated the constant number $1$ by means of the dummy gamma variable $\tilde{g}$, having unit mean and shape parameter $M$  to approximate the constant number one. It has to be mentioned that this approximation becomes tighter as $M$ goes to infinity~\cite{Alzer1997}, since $\lim_{y \to \infty}\frac{y^{y}x^{y-1}\mathrm{e}^{-yx}}{\Gamma\left( y \right)}=\delta\left( x-1 \right)$ with $\delta\left( x \right)$ being Dirac's delta function. Next, in~\eqref{coverage71}, we have applied Alzer's inequality (see~\cite{Bai2015}), where $\eta=M \left(M! \right)^{-\frac{1}{M}}$. As a last step, \eqref{general1} is obtained by making use of the Binomial theorem.
		
		\section{Proof of Lemma~\ref{deriv1}}\label{ArbitraryPDFProof2}
		Application of the  chain rule gives
		\begin{align}
			\pdv{	{P_{\mathrm{c}}}}{\bs_{1,l}^{*}}&=\pdv{	{P_{\mathrm{c}}}}{\bar{\gamma}}\pdv{	\bar{\gamma}}{\bs_{1,l}^{*}},\label{deriv2}
		\end{align}
		where $ \bar{\gamma}$ is given by \eqref{DE_SNR}.  The first derivative in \eqref{deriv2} is obtained as
		\begin{align}
			\pdv{	{P_{\mathrm{c}}}}{\bar{\gamma}}&=\sum^{M}_{n=1} \!\binom{M}{n}\! \frac{\left( -1 \right)^{n+1} n \eta T}{\bar{\gamma}^{2}} \mathrm{e}^{ -n \eta \frac{T}{\bar{\gamma}}}.\label{deriv3}
		\end{align}
		The derivative of $ 	\bar{\gamma}$ with respect to $ \bs_{1,l}^{*} $ is given by
	\begin{align}
			&	\pdv{	\bar{\gamma}}{\bs_{1,l}^{*}}= \!\frac{\gamma_{0}}{N_{1}N_{2}}\big(\!\!\tr \!\left(\bR_{r2}\bPhi_{2}^{\H}\bR_{1}\bPhi_{2}\right)\pdv{\big(\!\left(\diag\!\left(\bR_{t1}\bPhi_{1}\bR_{2} \right)\!\right)^{\T}\bs_{1,l}^{*}\big)}{\bs_{1,l}^{*}}\nn\\
			&+\pdv{\big(\left(\diag\left(\bR_{t1}\bPhi_{1}\bR_{1r} \right)\right)^{\T}\bs_{1,l}^{*}\big)}{\bs_{1,l}^{*}}\big)\label{deriv5}\\
			&=  \!\frac{\gamma_{0}}{N_{1}N_{2}}\big(\!\tr \!\left(\bR_{r2}\bPhi_{2}^{\H}\bR_{1}\bPhi_{2}\right)\diag\left(\bR_{t1}\bPhi_{1}\bR_{2} \right)\nn\\
			&+\diag\left(\bR_{t1}\bPhi_{1}\bR_{1r} \right)\!\!\big),\label{deriv4}
		\end{align}		\noindent where we have applied the property $ \tr\!\big(\!\bA \diag(\bs_{1,l}^{*})\big)=\left(\diag(A)\right)^{\T}\bs_{1,l}^{*} $ in \eqref{deriv5}.  Substitution of \eqref{deriv3}   and \eqref{deriv4} into \eqref{deriv2} provides the desired result.
		
	\end{appendices}
	
	\bibliographystyle{IEEEtran}
	
	\bibliography{mybib}
\end{document}